% package is made to function like \underline.
%\pagestyle{plain}

%%CHANGES 6/9

% Minor wordsmithing and typo-correcting throughout

% Changed definition of "reasonable"; see red displayed math and green comment in text

% Added reference for Bayes' theorem

% Changed the function in the discussion following Proposition 2.5 from
% ``D'' to ``e'', to match notation of Lemma 3.2

% Changed "distance" to "semimetric" in Lemma 3.2; see red changes and
% green comment in text

% Moved labels of rows and columns of the matrices in Section 3 to
% outside of the parentheses.

% Removed "showlabels" package.

\documentclass[12pt]{article}%
\usepackage{amssymb,amsfonts,amsmath,amsthm}
\usepackage{color}
\usepackage{cite}
\usepackage{blkarray}
\usepackage{graphicx}%
\setcounter{MaxMatrixCols}{30}
%TCIDATA{OutputFilter=latex2.dll}
%TCIDATA{Version=5.50.0.2953}
%TCIDATA{LastRevised=Wednesday, March 05, 2014 09:15:07}
%TCIDATA{<META NAME="GraphicsSave" CONTENT="32">}
%TCIDATA{<META NAME="SaveForMode" CONTENT="1">}
%TCIDATA{BibliographyScheme=Manual}
%TCIDATA{Language=American English}
%BeginMSIPreambleData
\providecommand{\U}[1]{\protect\rule{.1in}{.1in}}
%EndMSIPreambleData
\newtheorem{theorem}{Theorem}[section]
\newtheorem{proposition}[theorem]{Proposition}
\newtheorem{definition}[theorem]{Definition}
\newtheorem{lemma}[theorem]{Lemma}

\newtheorem{example}[theorem]{Example}

\setlength{\oddsidemargin}{0in}
\setlength{\evensidemargin}{0in}
\setlength{\textwidth}{6.5in}
\setlength{\topmargin}{0in}
\setlength{\headheight}{0in}
\setlength{\headsep}{0in}
\setlength{\topskip}{0in}
\setlength{\textheight}{9in}
\def\F{\mathbb F}
\def\R{\mathbb R}
\def\cX{\mathcal X}
\def\cY{\mathcal Y}
\DeclareMathOperator*{\argmin}{arg\,min}
\DeclareMathOperator*{\argmax}{arg\,max}

\begin{document}

\title{Matched Metrics and Channels}
\author{Marcelo Firer and Judy L.\ Walker}
\date{\today}
\maketitle

\begin{abstract}
  The most common decision criteria for decoding are maximum
  likelihood decoding and nearest neighbor decoding. It is well-known
  that maximum likelihood decoding coincides with nearest neighbor
  decoding with respect to the Hamming metric on the binary symmetric
  channel. In this work we study channels and metrics for which those
  two criteria do and do not coincide for general codes.
\end{abstract}

\section{Introduction}

In coding theory, the Hamming metric has a prominent status, since it
can be used to perform maximum likelihood (ML) decoding over a
memoryless binary symmetric channel (BSC), in the sense that decoding
by choosing a most probable codeword (ML decoding)  is actually the
same decision as decoding by choosing a closest codeword (nearest
neighbor (NN) decoding). In this decision criteria sense, we have also
that Euclidean distance is the proper distance for modulation-decoding
when considering a continuous channel with white Gaussian noise (see,
e.g., \cite{Costello}) and the Lee metric has the same distinguished
role when considering some kinds of modulation and transmission over
certain discrete memoryless $q$-ary channels \cite{ChiangWolf}.

The use of geometric properties of channels in coding theory is
explored in many generic situations, such as the one proposed by
Forney \cite{Forney} for geometric uniformity of codes on
continuous channels and the study of geometrically inspired properties
of codes over discrete channels, as in \cite{FVY} and \cite{KuKi},
where bounds for the packing radius are derived from a distance-like
structure defined on a hypergraph determined by the channel model.

Many different distances are considered in the context of coding
theory (a comprehensive account may be found in \cite[Chapter
16]{Deza}), but not much is known about the general relationship between
channel models and metrics and not much is known about the geometry of
many important channels.

In this work we are concerned with the most basic of those questions:
is any ML decoder also an NN decoder, and conversely, is any NN
decoder also an ML decoder? More precisely, if, for every code, ML
decoding on a given channel coincides with NN decoding with respect to
a given metric, we say the channel and the metric are \emph{matched}
to one another.

This terminology goes back over 40 years, as a 1971 paper
\cite{ChiangWolf} attributes it to notes from a 1967 course given by
Massey \cite{Massey}.  In \cite{ChiangWolf}, Chiang and Wolf classify
the channels matched to the Lee metric.  Their results are generalized
in a 1980 paper by S\'eguin \cite{Seguin}, which studies necessary and
sufficient conditions for a discrete memoryless channel to be matched
to an \emph{additive} metric, i.e., a metric that is defined on an
alphabet $A$ and then extended to a metric on $A^n$ by applying the
metric coordinate-wise and taking the sum.  Gabidulin returns to this
question in a 2007 book chapter \cite{Gabidulin} and asserts -- with
an implicit assumption that all metrics under consideration are
additive -- that matched metrics exist only ``for very restricted
channels'' and then studies a weakening of the matching condition.
Finally, to avoid potential confusion by the reader, we also mention
recent work on \emph{mismatched decoders} (see, e.g., \cite{Ziv}) but
note that that work considers different questions than those that are
considered here.  In particular, given a channel with transition
probabilities $\Pr(x | y)$, the quantity $-\frac1{n} \log \Pr(x | y)$
considered in \cite{Ziv} does not, in general, determine a metric in
any sense.

This work is organized as follows: In Section \ref{mas} we introduce
the rigorous definition of the matching problem and we show that any
metric admits a matched channel; we also show by example that the
converse does not always hold. We therefore also give some conditions
for a channel that obstruct the existence of a matched metric. In
Section \ref{Z} we construct a matched metric for the $Z$-channel. In
Section \ref{bac} we conjecture that any binary asymmetric channel
(BAC) admits a matched metric and present some evidence for this
conjecture.
%
%\begin{enumerate}
%\item Definitions, statement of problem; note that channel must be considered
%as $\mathbb{F}_{2}^{n}\rightarrow\mathbb{F}_{2}^{n}$ in order for things to
%make sense.
%
%\item Given any metric on $\mathbb{F}_{2}^{n}$, there is a channel for which
%this metric is master.
%
%\item General BAC: two extremes; BSC and $Z$; BSC is easy and $Z$ is do-able.
%
%\item Non-extremes of the BSC: Give examples and conjecture for now; hope to
%fill in theory later (or not).
%\end{enumerate}

\section{Matched metrics and channels}\label{mas}

It is well-known that, under the assumption of equally likely
codewords, maximum likelihood decoding coincides with nearest neighbor
decoding with respect to the Hamming metric on the binary symmetric
channel.  This is a general fact that does not depend on the code and
so we ask: for what other channels is there such a metric?  Following
Massey \cite{Massey}, we call such a channel-metric pairs
\emph{matched}, a term we shall define rigorously below. We first
recall that a \emph{channel} $W$ with input and output alphabets $\cX$ is
defined by a conditional probability distribution $\Pr:\cX \times \cX
\to \R$, where
\[
\Pr ( x|y) = \Pr(x \textrm{ received} \, |\, y \textrm{ sent}).
\]
A \emph{metric} on $\cX$ is a function $d:\cX \times \cX \to \R$ such
that:
\begin{enumerate}
\item $d$ is \emph{symmetric}: $d(x,y) = d(y,x)$ for all $x,y \in
  \cX$;
\item $d$ is \emph{nonnegative}: $d(x,y) \geq 0$ for all $x,y \in
  \cX$, with equality if and only if $x=y$; and
\item $d$ satisfies the \emph{triangle inequality}: $d(x,y) + d(y,z)
  \geq d(x,z)$ for all $x,y,z \in \cX$.
\end{enumerate}

We may now give a rigorous definition of a matched pair, as follows:

\begin{definition}
  Let $W:\cX \to \cX$ be a channel with input and output alphabets
  $\cX$ and let $d$ be a metric on $\cX$.
We say that $W$ and $d$ are \emph{matched} if maximum likelihood
decoding on $W$ coincides with nearest neighbor decoding with respect
to $d$ for every code $C\subseteq\cX$, i.e., if for every code
$C\subseteq \cX$ and every $x \in \cX$, we have
\begin{equation}\label{matched}
\argmax_{y \in C} \Pr(x \textrm{ received}\, |\, y \textrm{ sent}) =
\argmin_{y \in C} d(x,y).
\end{equation}
\end{definition}

Several comments are in order.  First, note that we are not
restricting to additive channels or metrics.  That is, we are considering
codes to be subsets of the alphabet, so that the binary symmetric
channel, for example, should be considered as a channel on $\F_2^n$
rather than as $n$ uses of a channel defined on $\F_2$, and the Hamming
metric should be considered as a metric on $\F_2^n$ rather than as a sum
of $n$ copies of the Hamming metric on $\F_2$.  Next, by considering
codes with just two codewords, it is straightforward that condition
(\ref{matched}) is equivalent to the condition that, for every $x$,
$y$, $z \in \cX$ with either $\Pr(x\text{ received}\,|\,y\text{
  sent})>0$ or $\Pr(x\text{ received}\,|\,z\text{ sent})>0$ (or both),
\begin{equation}\label{newmatched}
\Pr(x\text{ received}\,|\,y\text{ sent})>\Pr(x\text{ received}\,|\,z\text{
sent})\text{ \quad if and only if \quad }d(x,y)<d(x,z).
\end{equation}
Finally, we make the following two assumptions throughout the paper:
\begin{itemize}
\item Every channel is \emph{reasonable} in the sense that 
\[
\Pr(x
  \textrm{ received}\, |\, x \textrm{ sent}) > \Pr(x \textrm{
    received} \, | \, y \textrm{ sent})
\]
for all $x \neq y \in \cX$. 
\item All codewords are equally likely.
\end{itemize}
The first assumption is a necessary condition for a channel to admit a
matched metric since $0 = d(x,x)<d(x,y)$ for all $x \neq y \in \cX$
and any metric $d$.  The second condition is reasonable, since it does
not depend on the channel but rather on the messages to be sent; it is
needed in order for maximum likelihood decoding to be relevant.
(Alternatively, one could drop this assumption and replace ``maximum
likelihood decoding'' with ``maximum a posteriori probability
decoding'' throughout the paper.)  In light of this second assumption,
we note that, by Bayes' theorem (see, e.g., \cite{probtext}),
conditions~(\ref{matched}) and (\ref{newmatched}) are also equivalent
to
\begin{equation}\label{newnewmatched}
\Pr(y \text{ sent} \, | \, x \text{ received}) > \Pr(z \text{ sent} \,
| \, x \text{ received}) \text{ \quad if and only if \quad }
d(x,y)<d(x,z)
\end{equation}
for every $x,y,z \in \cX$ with either $\Pr(x\text{ received}\,|\,y\text{
  sent})>0$ or $\Pr(x\text{ received}\,|\,z\text{ sent})>0$ (or both).

We are interested in determining which channels admit matched metrics,
and which metrics admit matched channels.  For example, as described
in the introduction, it is well-known that the Hamming metric and the
$n$-fold binary symmetric channel BSC($n$) are matched; the Euclidean
metric and the $n$-fold additive white Gaussian noise channel
AWGN($n$) are matched \cite{Costello}; and the Lee metric and certain $n$-fold $q$-ary
channels are matched (see \cite{ChiangWolf}, Theorem 1).

The general question of which metrics admit matched channels is much
simpler than the question of which channels admit matched metrics.
Indeed, \emph{every} metric is matched to some channel:

\begin{proposition}  
  For any finite metric space $\left(\mathcal{X}, d\right)$ there is a
  channel $W:\cX \to \cX$ matched to $d$.
\end{proposition}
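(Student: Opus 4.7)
The plan is to exhibit a stochastic transition matrix $M$ (with $M_{x,y}=\Pr(x|y)$ and columns summing to $1$) such that, for every fixed $x$, the entries $M_{x,y}$ depend on $y$ only through $d(x,y)$ and are a strictly decreasing function of that distance. Once this is achieved, the matching condition (\ref{newmatched}) is immediate from strict monotonicity.

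The naive attempt $M_{x,y}=c_y\, f(d(x,y))$ with column normalization $c_y$ fails, because the comparison of $M_{x,y}$ to $M_{x,z}$ picks up the ratio $c_y/c_z$, which can overwhelm the $f$-ordering. Instead, I would use a diagonally dominant construction in which the diagonal absorbs the normalization. Fix a strictly decreasing function $f:[0,\infty)\to(0,1]$ with $f(0)=1$, for instance $f(t)=e^{-t}$, together with a parameter $\epsilon>0$ to be chosen small, and set
\[
\Pr(x|y)=\begin{cases}\epsilon\, f(d(x,y)) & \text{if }x\neq y,\\ 1-\epsilon\sum_{w\neq y}f(d(w,y)) & \text{if }x=y.\end{cases}
\]
Each column sums to $1$ by construction, and because $\cX$ is finite one can choose $\epsilon$ small enough that $\Pr(y|y)>\epsilon\geq \Pr(y|z)$ for every $z\neq y$; in particular the resulting channel is reasonable in the sense of the paper.

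It then remains to verify the matching condition case by case. When $x\notin\{y,z\}$, both $\Pr(x|y)$ and $\Pr(x|z)$ equal $\epsilon$ times $f$ of the corresponding distance, so strict monotonicity of $f$ yields $\Pr(x|y)>\Pr(x|z)\iff d(x,y)<d(x,z)$, with coincidence exactly when the distances agree. When $x=y\neq z$, the relation $0=d(x,x)<d(x,z)$ must be matched by $\Pr(x|x)>\Pr(x|z)$, which is precisely the content of the choice of $\epsilon$; the case $x=z\neq y$ is symmetric, and $y=z$ is trivial.

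The main obstacle is the tension between the row orderings that matching demands and the column normalization that stochasticity imposes: scaling columns destroys row orderings. The diagonally dominant construction circumvents this by giving all off-diagonal entries the common multiplicative factor $\epsilon$ and parking the slack produced by normalization on the diagonal, whose required ordering property --- that $\Pr(x|x)$ be the strict maximum of its column --- is automatic from $d(x,x)=0$ being the minimum value of $d(x,\cdot)$.
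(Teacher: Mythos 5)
Your proof is correct, but it takes a genuinely different route from the paper's. The paper sidesteps the normalization obstacle you identify by never constructing the forward transition matrix at all: it defines the channel through the backward probabilities $\Pr(y\text{ sent}\,|\,x\text{ received}) = \epsilon^{d(x,y)}/\gamma_x$ with $\gamma_x = \sum_{y}\epsilon^{d(x,y)}$, and then verifies condition~(\ref{newnewmatched}). Since every comparison in that condition holds the received symbol $x$ fixed, the normalizing constant $\gamma_x$ is common to both sides and cancels for free, so no diagonal trick is needed. You instead build the forward probabilities $\Pr(x\,|\,y)$ directly, where stochasticity normalizes over the \emph{sent} symbol --- transverse to the direction of comparison, exactly the tension you name --- and you resolve it by giving all off-diagonal entries the common factor $\epsilon$ and parking the slack on the diagonal; the small bookkeeping (e.g., $\epsilon < 1/|\cX|$ suffices, since $f\le 1$ forces $\Pr(y\,|\,y) \ge 1-\epsilon(|\cX|-1) > \epsilon \ge \Pr(x\,|\,y)$ for $x\ne y$) goes through because $\cX$ is finite. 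What your version buys is worth noting: it produces a channel in the standard sense, with $\sum_x \Pr(x\,|\,y)=1$ for each sent $y$, and verifies~(\ref{newmatched}) directly, with no appeal to Bayes' theorem or the equal-likelihood assumption. The paper's construction specifies only the posteriors and tacitly treats that as defining the channel; realizing those posteriors as an honest forward channel with uniform input would require exhibiting a consistent output distribution, a point the paper does not address, so your argument is arguably the more airtight of the two. Both share the same core idea of making probabilities a strictly decreasing function of distance ($\epsilon^{d(x,y)}$ in the paper versus your $\epsilon e^{-d(x,y)}$).
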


\begin{proof} Given a finite metric space $(\cX,d)$, we construct a
  channel $W:\cX \to \cX$ by constructing the conditional
  probabilities $\Pr(y \, | \, x) = \Pr(y \text{ sent} \, | \, x
  \text{ received})$ for $x,y \in \cX$. 

Fix $0<\epsilon<1$.  For $x,y \in \cX$, set $\beta_{xy} = \epsilon^{d(x,y)}$, set
$\gamma_x = \sum_{y \in \cX}\beta_{xy}$, and set $\Pr(y \, | \, x) = \frac{\beta_{xy}}{\gamma_x}$.
Then $0 < \Pr(y\, | \, x) \leq 1$ and, for a fixed $x \in \cX$, 
\begin{align*}
\sum_{y \in \cX} \Pr(y \,  | \, x) &= \sum_{y \in \cX}
\frac{\beta_{xy}}{\gamma_x}\\
&= \frac{1}{\gamma_x} \sum_{y \in \cX} \beta_{xy}\\
&= \frac{1}{\gamma_x}\cdot \gamma_x\\
&= 1
\end{align*}
and so this definition yields a valid channel.  To see that this
channel is matched to our metric, let $x,y,z \in \cX$.  Then
\[
P(y \, | \, x) > P(z \, | \, x) \iff
\frac{\beta_{xy}}{\gamma_x} > \frac{\beta_{xz}}{\gamma_x} \iff
\epsilon^{d(x,y)}> \epsilon^{d(x,z)} \iff d(x,y) < d(x,z),
\]
and so condition~(\ref{newnewmatched}) is satisfied.
% \[
% P(y \, | \, x) > P(z \, | \, x) \text{ if and only if }
% \frac{\beta_{xy}}{\gamma_x} > \frac{\beta_{xz}}{\gamma_x} \text{ if and
%   only if } \frac{1}{2^{d(x,y)}}> \frac{1}{2^{d(x,z)}} \text{ if and
%   only if } d(x,y) < d(x,z).
% \]
\end{proof}

On the other hand, not every channel has a matched metric, as the
following simple example demonstrates:

\begin{example}[\textbf{Inexistence of a matched metric}]\label{NoMatchExample}
Let $\mathcal{X}=\left\{  x,y,z\right\}  $ and $W:\cX \to \cX$ be defined by the
probabilities%
\[%
\begin{array}
[c]{ccc}%
\Pr\left(  x|x\right)  =a & \Pr\left(  x|y\right)=b & \Pr\left(  x|z\right)  =c\\
\Pr\left(  y|x\right)  =c & \Pr\left(  y|y\right)=a & \Pr\left(  y|z\right)  =b\\
\Pr\left(  z|x\right)  =b & \Pr\left(  z|y\right)=c & \Pr\left(  z|z\right)  =a
\end{array}
\]
with $a>b>c>0$ and $a+b+c=1$; for example we could have $a = \frac12$,
$b = \frac13$ and $c = \frac16$.  
%For future reference, we denote
%$\Pr\nolimits_{0}\left(x|y\right)=\Pr\left( x|y\right) $. 
Suppose $d:\cX \times \cX \to \R$ is a matched metric for $W$.  Then
\begin{align}
d(x,x)  &  <d(x,y)<d(x,z)\label{contra}\\
d(y,y)  &  <d(y,z)<d(y,x)\nonumber\\
d(z,z)  &  <d(z,x)<d(z,y).\nonumber
\end{align}
However, this leads to a contradiction:
\begin{align*}
d(z,x)  &  <d(z,y)\\
&  =d(y,z)<d(y,x)\\
&  =d(x,y)<d(x,z)\\
&  =d(z,x),
\end{align*}
where the inequalities follow from the inequalities in (\ref{contra}) and the
equalities are just the symmetry of the metric $d$.  Therefore there
can be no matched metric for the channel $W$.
\end{example}

The preceding tiny example can be generalized to any larger alphabet size
and we have the following:

\begin{proposition}
For any alphabet $\mathcal{X}$ with $\left|  \mathcal{X}\right|  \geq3$, there
is a channel $W:\cX \to \cX$ that does not admit a matched metric.
\end{proposition}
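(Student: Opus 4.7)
The plan is to embed the three-element construction of Example~\ref{NoMatchExample} into the larger alphabet by extending it nearly trivially on the extra symbols, so that the same symmetry-plus-cyclic-preference obstruction survives. Pick three distinct elements $x,y,z\in\cX$ and let $\mathcal{W}=\cX\setminus\{x,y,z\}$, with $n=|\cX|\geq 3$. Choose a small parameter $\delta>0$ and positive reals $a>b>c>0$ satisfying $a+b+c=1-(n-3)\delta$, with $\delta<1/n$ and $a>\delta$; for example, one can take $\delta$ very small and $(a,b,c)$ close to $(\tfrac12,\tfrac13,\tfrac16)$ rescaled.

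Next I would define the channel $W$ as follows. For each $\alpha\in\{x,y,z\}$, set $\Pr(w\,|\,\alpha)=\delta$ for every $w\in\mathcal{W}$, and define $\Pr(\beta\,|\,\alpha)$ for $\beta\in\{x,y,z\}$ by exactly the cyclic pattern of Example~\ref{NoMatchExample} with parameters $a,b,c$. For each $w\in\mathcal{W}$, set $\Pr(w\,|\,w)=1-(n-1)\delta$ and $\Pr(v\,|\,w)=\delta$ for every $v\in\cX\setminus\{w\}$. A direct calculation shows every column sums to $1$, so $W$ is a valid channel, and the choices $a>\max(b,c,\delta)$ and $1-(n-1)\delta>\delta$ make it reasonable in the sense of the paper.

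Finally, suppose for contradiction that $d$ is a metric matched to $W$, and restrict attention to the triple $\{x,y,z\}$. On this restriction the ordering of transition probabilities is identical to that of Example~\ref{NoMatchExample}, so condition~(\ref{newmatched}) forces the inequalities
\begin{align*}
d(x,x)&<d(x,y)<d(x,z),\\
d(y,y)&<d(y,z)<d(y,x),\\
d(z,z)&<d(z,x)<d(z,y).
\end{align*}
The symmetry argument of Example~\ref{NoMatchExample} then yields $d(z,x)<d(z,x)$, a contradiction.

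There is no substantive obstacle; the only care required is in selecting parameters so that the channel is both a valid probability distribution and reasonable. The cyclic block on $\{x,y,z\}$ carries all the information needed for the contradiction, and the near-identity action on $\mathcal{W}$ simply fills out the remaining rows and columns without interfering with the inequalities on the selected triple.
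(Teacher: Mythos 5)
Your proof is correct and takes essentially the same approach as the paper: both embed the three-element cyclic channel of Example~\ref{NoMatchExample} into the larger alphabet, leak a small amount of probability mass to the extra symbols while keeping the channel reasonable, and observe that any matched metric restricted to the triple $\{x,y,z\}$ would inherit the inequalities of (\ref{contra}) and hence the same symmetry contradiction. If anything, your uniform-$\delta$ extension is slightly cleaner than the paper's, whose off-diagonal value $\frac{1-a}{M-1}$ on the complement degenerates when $M=1$, a case your construction handles uniformly.
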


\begin{proof} If $\left|\cX\right| = 3$, proceed as in
  Example~\ref{NoMatchExample}.  Otherwise, write $\mathcal{X}
  =\mathcal{X}_0 \cup \mathcal{Y} $ where $|\mathcal{X}_0| = 3$ and
  $\left|\cY\right| = M \geq 1$.  Label the elements of $\cX_0$ so that
  $\cX_0 = \left\{ x,y,z\right\}$.  Fix positive real numbers
  $a>b>c>d$ with $a+b+c+d = 1$.  Define conditional probabilities for
  $u,v \in \cX$ by 
\[
\Pr(u \, | \, v) = 
\begin{cases}
\Pr_0(u \, | \, v) & \text{ if } u, v \in \cX_0\\
0 & \text{ if } u \in \cX_0, v \in \cY\\
\frac{d}{M} & \text{ if } u \in \cY, v \in \cX_0\\
a & \text{ if } u = v \in \cY\\
\frac{1-a}{M-1} & \text{ if } u, v \in \cY \text{ and } u \neq v,
\end{cases}
\]
where $\Pr_0(u \, | \, v)$ is as described in
Example~\ref{NoMatchExample} for $u, v \in \cX_0$.  Then it is
straightforward to check that these conditional probabilities define a
channel $W:\cX \to \cX$ and that this channel has no metric for the
same reason as in Example~\ref{NoMatchExample}.
\end{proof}
\bigskip

The 3-step cycle of Example~\ref{NoMatchExample} gives rise to a more
general obstruction criterion for the existence of a metric matching a
given channel. Given a channel $W: \cX \to \cX$, $x\in\mathcal{X}$,
and $0 \leq t \leq 1$,
we define the $t$\emph{-decision region centered at} $x$ to be
$B^{t}\left( x\right) :=\left\{ y\in\mathcal{X}\, | \, \Pr\left( x|y\right)
  \geq t\right\} $. 
% It is obvious that $B^{t}\left( x\right)
% =\emptyset$ for $t>1$, $B^{0}\left( x\right) =\mathcal{X}$ and being
% $\mathcal{X}$ finite, there are finitely many distinct such balls,
% determined by constants $0\leq
% t_{r}<t_{r-1}<\cdots<t_{1}<t_{0}\leq0$. We note those parameters may
% depend on $x$ ($r=r\left( x\right) $end on $x$ ($r=r\left( x\right) $
% and $t_{i}=t\left( x\right) $), but in order to lighten the indices,
% we shall omit those informations and write $B^{i}\left( x\right)
% :=B^{t_{i}\left( x\right) }\left( x\right) $.
We say that $x_0,x_1,...,x_{r-1}\in\mathcal{X}$ is a \emph{decision
chain} of length $r$ on $W$ if there are values
$t_0,t_1,...,t_{r-1}>0$ satisfying the following
conditions:%
\begin{align*}
\text{\emph{(FIP)} \emph{Forward Inclusion Property}: } & x_{i+1}\in B^{t_{i}}\left(  x_{i}\right)  \ \\
\text{\emph{(MEP) Backward Exclusion Property}: } &  x_{i}\notin B^{t_{i+1}}\left(  x_{i+1}\right)  \text{ }%
\end{align*}
where we consider the indices modulo $r$.  For example, taking
$x_0=x$, $x_1=y$, $x_2=z$ and $t_0=t_1=t_2=\frac{b+c}{2}$ in
Example~\ref{NoMatchExample} gives a decision chain of length 3.

With this definition we can state the following:

\begin{proposition}
Let $W$ be a channel over the alphabet $\mathcal{X}$. If $W$ admits a
decision chain of length $r \geq 3$, then there is no metric matched
to $W$.
\end{proposition}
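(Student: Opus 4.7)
The plan is to generalize the length-3 cyclic contradiction of Example~\ref{NoMatchExample} to an arbitrary length-$r$ decision chain. The core observation is that, when we fix $x_i$ as the received symbol, the FIP at index $i$ tells us about $\Pr(x_i\mid x_{i+1})$ while the BEP at index $i-1$ tells us about $\Pr(x_i\mid x_{i-1})$, and these bounds pit the two probabilities against each other on opposite sides of the same threshold $t_i$.

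More concretely, the steps I would carry out are as follows. First, suppose for contradiction that a metric $d$ is matched to $W$ and that $x_0,\dots,x_{r-1}$ is a decision chain with thresholds $t_0,\dots,t_{r-1}$. For each index $i$ (read modulo $r$), FIP at index $i$ gives $x_{i+1}\in B^{t_i}(x_i)$, i.e.\ $\Pr(x_i\mid x_{i+1})\geq t_i$, while BEP at index $i-1$ gives $x_{i-1}\notin B^{t_i}(x_i)$, i.e.\ $\Pr(x_i\mid x_{i-1})<t_i$. Chaining these inequalities yields
\[
\Pr(x_i\mid x_{i+1})\;\geq\;t_i\;>\;\Pr(x_i\mid x_{i-1}),
\]
and in particular the hypothesis of condition~(\ref{newmatched}) is met because $\Pr(x_i\mid x_{i+1})\geq t_i>0$. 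Applying (\ref{newmatched}) therefore gives $d(x_i,x_{i+1})<d(x_i,x_{i-1})$ for every $i$.

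Second, I would use the symmetry axiom of $d$ to rewrite this as $d(x_i,x_{i+1})<d(x_{i-1},x_i)$. Setting $a_i:=d(x_i,x_{i+1})$ (indices modulo $r$), this reads $a_i<a_{i-1}$ for every $i$, so that
\[
a_0\;>\;a_1\;>\;a_2\;>\;\cdots\;>\;a_{r-1}\;>\;a_0,
\]
a contradiction. Hence no matched metric for $W$ can exist.

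I don't expect any step to be a serious obstacle; the only minor subtlety is making sure the matching hypothesis in (\ref{newmatched}) is available (which it is, thanks to $t_i>0$ in FIP), and being careful that the indexing of the BEP hypothesis is shifted by one so that the threshold $t_i$ appearing in FIP at $i$ is the same as the one appearing in BEP at $i-1$. The argument uses only symmetry of $d$ (not the triangle inequality), and the length $r\geq 3$ hypothesis is what ensures the strict cyclic chain actually loops back to produce $a_0>a_0$.
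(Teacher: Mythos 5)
Your proof is correct and is essentially the paper's own argument: both pair FIP at index $i$ with BEP at index $i-1$ (sharing the threshold $t_i$), invoke matchedness and the symmetry of $d$, and chain the resulting strict inequalities around the cycle to get $d(x_0,x_1)<\cdots<d(x_0,x_1)$. Your uniform indexing $a_i := d(x_i,x_{i+1})$ and the explicit check that $t_i>0$ makes condition~(\ref{newmatched}) applicable are just a slightly cleaner packaging of the same contradiction.
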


\begin{proof}Let $x_0,x_1,...,x_{r-1}\in\mathcal{X}$ be a
decision chain on $W$ with parameters $t_0,t_1,...,t_{r-1}$ and suppose
$d:\mathcal{X}\times\mathcal{X\rightarrow}\mathbb{R}$ is a metric
matched to $W$.  
Using \emph{FIP} for $i=0$ and \emph{BEP} for $i=r-1$ we get
that
\[
d\left(  x_0,x_1\right)  <d\left(  x_0,x_{r-1}\right)  =d\left(
x_{r-1},x_0\right)
\]
where the equality follows from the symmetry property of $d$. Using
\emph{FIP} for $i=r-1$ and \emph{BEP} for $i=r-2$ we get that%
\[
d\left( x_{r-1},x_0\right)  <d\left(  x_{r-1},x_{r-2}\right)  =d\left(
x_{r-2},x_{r-1}\right).
\]
Proceeding in this manner, we get%
\[
d\left(  x_0,x_1\right)  <d\left(  x_0,x_{r-1}\right)  <d\left(
x_{r-1},x_{r-2}\right)  <\cdots<d\left(  x_{0},x_{1}\right),
\]
a contradiction.  Thus $d$ cannot exist.
\end{proof}

\bigskip

The preceding proposition gives an obstruction to the existence of a
metric matched to a channel, but many channels do not fit into this
picture. If we consider $W$ to be a channel that is both reasonable
(in the sense described above) and symmetric (in the sense that
$\Pr\left( x \text{ received} \, | \, y \text{ sent}\right) =\Pr\left(
  y \text{ received}\, |\, x \text{ sent}\right)$ for every
$x,y\in\mathcal{X}$), then defining
\[
e\left(  x,y\right)  =
\begin{cases}
0& \text{ if }x=y\\
1-\Pr\left(  x|y\right)  & \text{ if }x\neq y
\end{cases}
\]
we get that $e$ satisfies all the properties of a metric except for
the triangle inequality; as we shall see in Lemma~\ref{squeeze-lemma}
below, it is not difficult to obtain a metric $d\left( x,y\right)$
from $e\left( x,y\right)$.

It follows that the main difficulty in finding a metric matched to a
given channel is to find a \emph{symmetric} function satisfying
condition (\ref{matched}) (or, equivalently, (\ref{newmatched}) or (\ref{newnewmatched})).

In the next section, we construct a matched metric for the $n$-fold
$Z$-channel, for any $n$.  We consider this result to be a bit
surprising, since, as described above, it is the symmetry property that
poses the most difficulty in constructing a metric matched to a given
channel, and the $Z$-channel is as asymmetrical as possible, in the
sense that for $x\neq y$ we have that $\Pr\left( x \text{ received} \,
  |\, y \text{ sent}\right) >0$ implies $\Pr\left(y \text{
    received} \, |\, x \text{ sent}\right) =0$.

\section{A matched metric for the $Z$-channel\label{Z}}

The $Z$-channel is the memoryless binary input and output channel
with transition probabilities given by $\Pr(0|0)=1$, $\Pr(1|0)=0$,
$\Pr(0|1)=q$, $\Pr(1|1)=1-q$, where $0<q<\frac{1}{2}$ and, as usual,
we write $\Pr(x|y)$ to mean $\Pr(x \text{ received}\, |\, y \text{
  sent})$. The $n$-fold $Z$-channel is the memoryless channel with
input and output $\mathbb{F}_{2}^{n}$ with
\[
\Pr(x|y)=\prod_{i=1}^{n}\Pr(x_{i}|y_{i})
\]
for $x=(x_{1},\cdots,x_{n}),y=(y_{1},\cdots,y_{n})\in\mathbb{F}_{2}^{n}$.

The main result of this section is as follows:

\begin{theorem}\label{z-theorem} For any $n \geq 1$, there is a metric matched to the
  $n$-fold $Z$-channel.
\end{theorem}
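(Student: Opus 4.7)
For the $n$-fold $Z$-channel the transition probabilities factor as $\Pr(x \mid y) = (1-q)^{|x|} q^{|y|-|x|}$ when $x \leq y$ coordinatewise and $\Pr(x \mid y) = 0$ otherwise. Thus for fixed $x$, $\Pr(x \mid y)$ depends only on whether $y \geq x$ and on $|y|$ in that case, and is strictly decreasing in $|y|$ when it is positive. Applying the reformulation~(\ref{newmatched}) of matching, producing a matched metric amounts to finding a metric $d$ on $\F_2^n$ satisfying
\emph{(C1)} $d(x,y) < d(x,z) \iff |y| < |z|$ whenever $y, z \geq x$, and
\emph{(C2)} $d(x,y) < d(x,z)$ whenever $y \geq x$ and $z \not\geq x$.

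To construct such a $d$, I would fix a strictly decreasing sequence $\delta_0 > \delta_1 > \cdots > \delta_{n-1} > 0$ satisfying the rapid-decay condition $\delta_{i-1} > \sum_{j=i}^{n-1}\delta_j$ (for instance $\delta_i = 2^{n-i}$), and consider the weighted hypercube $G$ on $\F_2^n$: edges join vertices $u, v$ with $|u \oplus v| = 1$, of weight $\delta_{\min(|u|,|v|)}$. Let $d(x,y)$ be the shortest-path distance in $G$; this is automatically a metric. Writing $\psi(j,k) = \sum_{i=j}^{k-1}\delta_i$, one then shows the closed form
\[
d(x,y) = \psi(|x|,\,|x \vee y|) + \psi(|y|,\,|x \vee y|),
\]
corresponding to the canonical path that first flips each bit of $y \setminus x$ up and then each bit of $x \setminus y$ down. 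Optimality follows from an exchange argument: because $\delta$ is decreasing, swapping any consecutive down-then-up pair in a path at weight $w$ for up-then-down replaces cost $2\delta_{w-1}$ by $2\delta_w$; and any flip of a bit outside $x \oplus y$ must be undone, contributing extra cost at least $2\delta_{n-1} > 0$.

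Matching is verified from this closed form. If $y \geq x$ then $x \vee y = y$ and $d(x,y) = \psi(|x|,|y|)$, strictly increasing in $|y|$, yielding (C1). For (C2), if $z < x$ then $x \vee z = x$, so $d(x,z) = \psi(|z|,|x|) \geq \delta_{|x|-1}$ while $d(x,y) = \psi(|x|,|y|) \leq \sum_{i=|x|}^{n-1}\delta_i < \delta_{|x|-1}$ by rapid decay. If instead $x, z$ are incomparable then $|x \vee z| > \max(|x|,|z|)$, so $d(x,z) \geq \psi(|x|,|x \vee z|) \geq \delta_{|x \vee z|-1}$; splitting $\psi(|x|,|y|) = \psi(|x|,|x \vee z|) + \psi(|x \vee z|,|y|)$ when $|y| > |x \vee z|$ and again using rapid decay on the tail completes the comparison. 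The main obstacle is the closed-form identification in the second paragraph: although the triangle inequality comes for free from the shortest-path definition, proving the formula requires ruling out detours above weight $|x \vee y|$ where the $\delta_i$ are smallest, and rapid decay is precisely what makes such detours costlier than the canonical path.
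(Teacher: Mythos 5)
Your argument is correct, and it reaches the theorem by a genuinely different route than the paper does. The paper proceeds by induction on $n$: assuming a matched distance matrix $D_n$, it writes the probability matrix $P_{n+1}$ of the $(n+1)$-fold channel in $2\times 2$ block form, builds a symmetric matrix $E=\begin{pmatrix} A & B\\ B^{T} & C\end{pmatrix}$ with $A=D_n$, $C=\delta D_n$, and the entries of $B$ pinned down by a three-case analysis (giving a matched \emph{semimetric}), and only then restores the triangle inequality by the squeezing construction of Lemma~\ref{squeeze-lemma}, which rescales all off-diagonal entries into $[1-\delta,1+\delta]$. You instead give a single explicit construction, uniform in $n$: a shortest-path metric on the weighted hypercube with superdecreasing level weights, for which symmetry and the triangle inequality are automatic and no induction or squeezing is needed; the work is shifted to the closed-form identification $d(x,y)=\psi(|x|,|x\vee y|)+\psi(|y|,|x\vee y|)$, and your exchange/detour argument for it is sound (a down-then-up pair at level $w$ costs $2\delta_{w-1}$ versus $2\delta_w$ for up-then-down, and any bit flipped outside $x\vee y$ must be unflipped, strictly adding cost at the levels it traverses). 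Given the closed form, your (C1) and (C2) are exactly condition~(\ref{newmatched}) for this channel, since $\Pr(x|y)=(1-q)^{|x|}q^{|y|-|x|}$ when $y\geq x$ coordinatewise and $0$ otherwise, and your rapid-decay estimates for (C2) check out, including the example $\delta_i=2^{n-i}$. One small correction: the closed form itself needs only that the $\delta_i$ be positive and strictly decreasing (detours and bad orderings are already strictly costlier then); the superdecreasing condition $\delta_{i-1}>\sum_{j\geq i}\delta_j$ is not what rules out detours but rather precisely what (C2) needs, namely that every word of positive likelihood be strictly closer to $x$ than every word of zero likelihood. As for what each approach buys: the paper's blockwise-plus-squeeze machinery never requires guessing a global formula and its squeeze lemma is reused in Section~\ref{bac} for the BAC computations, while your construction yields an explicit, intrinsic graph metric on $\F_2^n$ with a clean formula, arguably more informative since it exhibits the matched metric as a weighted path metric on the hypercube, though it is tailored to the partial-order structure special to the $Z$-channel.
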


Before proving Theorem~\ref{z-theorem}, we need a lemma.

\begin{lemma}\label{squeeze-lemma} Let $\cX$ be a finite set and
  suppose $e: \cX \times \cX \to \R$ is a
 \emph{semimetric}, i.e., a function satisfying
\begin{enumerate}
\item $e$ is \emph{symmetric}: $e(x,y) = e(y,x)$ for all $x,y \in
  \cX$; and
\item $e$ is \emph{nonnegative}: $e(x,y) \geq 0$ for all $x,y \in
  \cX$, with equality if and only if $x=y$
\end{enumerate}

Then there is a metric $d$ on $\cX$ such that $d(x,y)<d(x,z)$ if and
only if $e(x,y)<e(x,z)$ for every $x,y,z \in \cX$.
\end{lemma}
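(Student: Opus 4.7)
The plan is to exploit the fact that $e$ fails only the triangle inequality among the metric axioms, and to fix this while preserving the rank order that $e$ induces from each basepoint. The standard trick is to push all nonzero values of $e$ up by a large additive constant, so that the sum of any two positive shifted values is automatically at least as big as any third. Since the shift is constant, the pairwise comparison at a fixed basepoint is unaffected.

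Concretely, I would set $M = \max_{u \neq v} e(u,v)$, which exists because $\cX$ is finite, and define
\[
d(x,y) = \begin{cases} 0 & \text{if } x = y, \\ M + e(x,y) & \text{if } x \neq y. \end{cases}
\]
Symmetry and positive-definiteness of $d$ are inherited directly from $e$. For the triangle inequality, the only nontrivial case is when $x, y, z$ are pairwise distinct; then $d(x,y) + d(y,z) = 2M + e(x,y) + e(y,z) \geq 2M \geq M + e(x,z) = d(x,z)$. The cases with a coincidence among $x, y, z$ reduce to a trivial inequality by direct inspection.

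For the order-preservation claim, let $x, y, z \in \cX$. If $y = x$, then $d(x,y) = 0 < d(x,z)$ iff $z \neq x$ iff $e(x,y) = 0 < e(x,z)$; similarly if $z = x$. If $y \neq x$ and $z \neq x$, then $d(x,y) - d(x,z) = e(x,y) - e(x,z)$, so $d(x,y) < d(x,z) \iff e(x,y) < e(x,z)$.

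I do not anticipate a serious obstacle here; the only thing to watch is that $M$ is taken large enough (any $M \geq \max_{u \neq v} e(u,v)$ works), and that the case $x = z$ in the triangle inequality is handled, which it is because $d(x,y) \geq 0$.
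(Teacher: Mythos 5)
Your proof is correct and is essentially the paper's own argument in additive rather than multiplicative form: both proofs apply a strictly increasing transformation to the off-diagonal values of $e$ so that they land in an interval $[a,b]$ with $b\leq 2a$ (the paper rescales into $[1-\delta,1+\delta]$ with $\delta<\frac13$; you shift into $(M,2M]$), which makes the triangle inequality automatic for distinct triples while preserving all basepoint-wise order comparisons. No gaps beyond the trivial $|\cX|=1$ edge case, which the paper's version shares.
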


\begin{proof} Since $\cX$ is finite, the set $\{e(x,y) \, | \, x,y \in
  \cX, x \neq y\}$ has both a maximal and a minimal element; set $m =
  \min\{e(x,y) \, | \, x,y \in \cX, x \neq y\}$ and $M = \max\{e(x,y) \, | \,
  x,y \in \cX\}$.  Fix $\delta$ with $0 < \delta < \frac13$ and let
  $f:[m,M] \to [1-\delta,1+\delta]$ be a strictly increasing bijective
  function.  (For example, take $f$ to be the linear function which
  maps $m$ to $1-\delta$ and $M$ to $1+\delta$.)
  Define $d:\cX \times \cX \to \R$ by 
\[
d(x,y) = \begin{cases} e(x,y) = 0 & \text{ if } x=y,\\ f(e(x,y)) &
  \text{ otherwise.} \end{cases}
\]
The symmetry and nonnegativity of $d(\cdot,\cdot)$
  follow immediately from these properties of $e(\cdot,\cdot)$ and the
  fact that
% That \textcolor{red}{$d(\cdot ,\cdot )$} is symmetric and
% nonnegative follows immediately from the fact that
% \textcolor{red}{$e(\cdot ,\cdot )$ is so and the fact that}
$f$ is strictly increasing. To check that $d$ satisfies the
triangle inequality, let $x,y,z$ be distinct elements of $\cX$.  Then
\[
d(x,y) + d(y,z) \geq 2(1-\delta) > 2(1-\frac{1}{3})= \frac{4}{3} >
1+\delta \geq d(x,z).
\]
Hence $d$ is a metric. We remark there many other ways to transform a
semimetric into a metric. Details can be found in the
first chapter of \cite{Deza}.
\end{proof}

\begin{proof}[Proof of Theorem~\ref{z-theorem}.]  We proceed by
  induction on $n \geq 1$.   For the base case of $n=1$, we note that
  the Hamming metric is matched to the $Z$-channel on $\F_2$.

  Suppose there is a matched metric for the $n$-fold $Z$-channel,
  determined by the $2^{n}\times2^{n}$ matrix $D_{n}$. Our goal is to
  construct a $2^{n+1}\times2^{n+1}$ matrix $D_{n+1}$ that represents
  a metric that is matched to the $(n+1)$-fold $Z$-channel. For any $u
  \in\mathbb{F}_{2}^{n+1}$, we write $u = (x_{1}, \dots, x_{n},
  \theta) =: x\theta$, where $x \in\mathbb{F}_{2}^{n}$ and
  $\theta\in\mathbb{F}_{2}$; given an ordering $v_{1}, \dots, v_{N}$ of
  the elements of $\mathbb{F}_{2}^{n}$ ($N = 2^{n}$), this yields an
  ordering $v_{1}0, \dots, v_{N}0, v_{1}1, \dots, v_{N}1$ of
  $\mathbb{F}_{2}^{N+1}$.

Let $P_{n}=(P_{x,y})_{x,y\in\mathbb{F}_{2}^{n}}$ be the probability matrix for
the $n$-fold $Z$-channel, so that $P_{x,y}=\Pr(x \text{ received}\,
|\, y \text{ sent})$. Then the probability
matrix $P_{n+1}$ for the $(n+1)$-fold $Z$-channel is given by
\begin{align*}
P_{n+1}  &  =
\begin{blockarray}{ccc}
&v_10 \quad \dots \quad v_N0 & v_11 \quad \dots \quad v_N1\\
\begin{block}{c(c|c)}
v_10 &&\\
\vdots&P_n\cdot \Pr(0 \text{ received}\, | \, 0 \text{ sent})& P_n \cdot \Pr(0 \text{ received}\, | \, 1 \text{ sent})\\
v_N0 &&\\
\cline{2-3}
v_11 &&\\
\vdots&P_n\cdot \Pr(1 \text{ received}\, | \, 0 \text{ sent})& P_n \cdot \Pr(1 \text{ received}\, | \, 1 \text{ sent})\\
v_N1 &&\\
\end{block}
\end{blockarray}\\
&  =\begin{blockarray}{ccc}
&v_10 \quad \dots \quad v_N0 & v_11 \quad \dots \quad v_N1\\
\begin{block}{c(c|c)}
v_10 &&\\
\vdots&P_n& P_n \cdot q\\
v_N0 &&\\
\cline{2-3}
v_11 &&\\
\vdots&0& P_n \cdot (1-q)\\
v_N1 &&\\
\end{block}
\end{blockarray} \quad .
\end{align*}
% \begin{align*}
% P_{n+1}  &  =\left(  \
% \begin{array}
% [c]{c|c|c}
% & v_{1}0\quad\cdots\quad v_{N}0 & v_{1}1\quad\cdots\quad v_{N}1\\\hline
% v_{1}0 &  & \\
% \vdots & P_{n}\cdot \Pr(0\text{ received}\,|\,0\text{ sent}) & P_{n}\cdot
% \Pr(0\text{ received}\,|\,1\text{ sent})\\
% v_{N}0 &  & \\\hline
% v_{1}1 &  & \\
% \vdots & P_{n}\cdot \Pr(1\text{ received}\,|\,0\text{ sent}) & P_{n}\cdot
% \Pr(1\text{ received}\,|\,1\text{ sent})\\
% v_{N}1 &  &
% \end{array}
% \ \right) \\
% &  =\left(  \
% \begin{array}
% [c]{c|c|c}
% & v_{1}0\quad\cdots\quad v_{N}0 & v_{1}1\quad\cdots\quad v_{N}1\\\hline
% v_{1}0 &  & \\
% \vdots & P_{n} & P_{n}\cdot q\\
% v_{N}0 &  & \\\hline
% v_{1}1 &  & \\
% \vdots & 0 & P_{n}\cdot(1-q)\\
% v_{N}1 &  &
% \end{array}
% \ \right)  .
% \end{align*}
We will use this information to construct a matrix $D_{n+1}$ that
determines a metric matched to the $(n+1)$-fold $Z$-channel. The
entries of the matrix $D_{n+1}=(d_{uv})_{u,v\in\mathbb{F}_{2}^{n+1}}$
must satisfy the following properties:
\begin{enumerate}
\item[(M)] $d$ must be \emph{matched}: $d_{uv}<d_{uw}$ if and only if
  $\Pr(u|v)>\Pr(u|w)$.
\item[(S)] $d$ must be \emph{symmetric}: $d_{uv} = d_{vu}$ for every
  $u, v\in\mathbb{F}_{2}^{n+1}$.
\item[(N)] $d$ must be \emph{nonnegative}: $d_{uv} \geq0$ for every
  $u,v \in\mathbb{F}_{2}^{n+1}$, with equality if and only if $u=v$.
\item[(T)] $d$ must satisfy the \emph{triangle inequality}:
  $d_{uv}+d_{vw} \geq d_{uw}$ for every $u,v,w\in \mathbb{F}_{2}^{n+1}$.
\end{enumerate}

Note that the last three of these properties are required for
$D_{n+1}$ to represent a metric, while the first is what makes the
metric matched to the channel. We begin by constructing a matrix $E$
that satisfies properties (M), (S), and (N), i.e., a matrix that
represents a semimetric matched to the channel. We then apply
Lemma~\ref{squeeze-lemma} to $E$ to transform $E$ into a matrix $D$
that satisfies property (T) while maintaining the other three
properties.  This modified matrix will be our desired $D_{n+1}$.

Because $E$ must be a symmetric matrix by (S), we can write
\[
E = \begin{blockarray}{ccc}
&v_10 \quad \dots \quad v_N0 & v_11 \quad \dots \quad v_N1\\
\begin{block}{c(c|c)}
v_10 &&\\
\vdots&A& B\\
v_N0 &&\\
\cline{2-3}
v_11 &&\\
\vdots&B^{T}&C\\
v_N1 &&\\
\end{block}
\end{blockarray} \quad ,
% \left(  \
% \begin{array}
% [c]{c|c|c}
% & v_{1}0 \quad\cdots\quad v_{N}0 & v_{1}1 \quad\cdots\quad v_{N}1\\\hline
% v_{1}0 &  & \\
% \vdots & A & B\\
% v_{N}0 &  & \\\hline
% v_{1}1 &  & \\
% \vdots & B^{T} & C\\
% v_{N}1 &  &
% \end{array}
% \ \right)  ,
\]
where we require $A = (a_{xy})_{x,y \in\mathbb{F}_{2}^{n}}$, $B =
(b_{xy})_{x,y \in\mathbb{F}_{2}^{n}}$ and $C = (c_{xy})_{x,y \in\mathbb{F}%
_{2}^{n}}$ to be $2^{n} \times2^{n}$ matrices, with $A$ and $C$ symmetric.

\textbf{Determining matrix $A$:} To satisfy (M), we must have $a_{xy}<a_{xz}$ if and only if $\Pr
(x0|y0)>\Pr(x0|z0)$ if and only if $\Pr(x|y)>\Pr(x|z)$. Thus $A$ must
represent a matched metric for the $n$-fold $Z$-channel, and we may
set $A=D_{n}$.

\textbf{Determining matrix $B$:} Let us consider the entry $b_{x,y}$ of the matrix $B$. We break this into
three cases.

\textit{Case 1. } First, suppose $\Pr(x|y)\neq0$ and $y$ is not the
all-ones vector in $\mathbb{F}_{2}^{n}$.  Without loss of generality,
we may assume that all of the 0's in $y$ are at the beginning, so that
$y=0^{j}1^{k}$ with $j\geq1$ and $j+k=n$, where, for example, we mean by $0^21^3$
the vector $(0,0,1,1,1)$.  Since $\Pr(1|0)=0$, the first $j$ coordinates of $x$
are 0 as well. Hence, without loss of generality, we may assume that
$x=0^{j}0^{s}1^{t}$ with $s+t=k$. Now set
$z=(1,y_{2},\dots,y_{n})$. Then
\[
\Pr(x|z)=\Pr(0|1)\prod_{i=2}^{n}\Pr(x_{i}|y_{i})=q\Pr(x|y)
\]
since $\Pr(x_{1}|y_{1})=\Pr(0|0)=1$. Thus we have
\[
\Pr(x0|y1)=q\Pr(x|y)=\Pr(x|z)=\Pr(x0|z0)
\]
and so we set $b_{xy}=a_{xz}$. We remark that, since $x\neq z$, the induction
hypothesis ensures that $a_{xz}\neq0$, and hence also $b_{xy}\neq0$.

\textit{Case 2. } We now consider the case where $y = 1^n$
is the all-ones vector in $\F_2^n$, and find the value of
$b_{x1^n}$.  Note first that $\Pr(1^n0|1^n1) =
q(1-q)^n$ is the second-largest entry in the row of $P_{n+1}$ indexed
by $1^n0$, second only to $\Pr(1^n1|1^n1)$.
Since $a_{1^n1^n} =0$, we therefore require $b_{1^n1^n}$ to be
smaller than every nonzero $a_{1^nz}$; for concreteness, we set
$b_{1^n1^n} = \alpha\min\{a_{1^nz}|z \neq 1^n\}$ with $0<\alpha<1$.

For $x \neq 1^n$, without loss of generality, we may assume
$x=0^{s}1^{t}$, where $s\geq 1$ and $s+t=n$. Then
\[
\Pr(x|1^n)=q^{s}(1-q)^{t}%
\]
and
\[
\Pr(x0|1^n1)=q^{s+1}(1-q)^{t}.
\]
Suppose $z\neq 1^n$ satisfies $\Pr(x|z)\neq0$; since $x \neq
1^n$, we know such a $z$ exists.  Since $\Pr(1|0)=0$, without
loss of generality we can write $z=0^{j}1^{k}1^{t}$, where $j+k=s$ and
$j\geq1$. This means
\[
\Pr(x|z)=\Pr(0|0)^{j}\Pr(0|1)^{k}\Pr(1|1)^{t}=q^{k}(1-q)^{t},
\]
where $k<s$. Putting this together, we have
\[
\Pr(x0|1^n1)=q^{s+1}(1-q)^{t}<q^{k+1}(1-q)^{t}=\Pr(x0|z1)<q^{k}(1-q)^{t}%
=\Pr(x0|z0)
\]
and so we require $b_{x1^n}>b_{xz}>a_{xz}$ for every $z\neq 1^n$ with
$\Pr(x|z)\neq0$. For concreteness, we set
\[
b_{x1^n}=\beta\max\{b_{xz}\,|\,z\neq 1^n\text{ and }\Pr(x|z)\neq0\}
\]
with $\beta>1$.

\textit{Case 3. } Finally, if $\Pr(x|y)=0$, then $\Pr(x0|y1)=0$ and so
\[
\Pr(x0|y1)<\Pr(x0|z1)=q\Pr(x0|z0)<\Pr(x0|z0)=\Pr(x|z)
\]
for every $z$ with $\Pr(x|z)\neq0$. This means we require
\[
b_{xy}>b_{xz}>a_{xz}%
\]
for every $z$ with $\Pr(x|z)\neq0$, and we set
\[
b_{xy}=\gamma\max\{b_{xz}\,|\,\Pr(x|z)\neq0\}
\]
with $\gamma>1$.

By the remark made in Case 1 and the constructions in the two
remaining cases, the matrix $B=(b_{xy})$ has strictly positive
entries.

\textbf{Determining matrix $C$:} Because $\Pr(x1|y0)=0$ for all $x$
and $y$, for any $x$ and $z$ with $\Pr(x|z) \neq 0$, we must have $c_{xz}<b_{yx}$ for
all $y$. Because $\Pr(x1|y1)<\Pr(x1|z1)$ if and only if
$\Pr(x|y)<\Pr(x|z)$, the matrix $C$ must represent a matched metric
for the $n$-fold $Z$-channel. By choosing $\delta$ sufficiently small and setting
$C=\delta D_{n}$, we can satisfy these conditions.

We now have a matrix
\[
E = (e_{uv})_{u,v \in\mathbb{F}_{2}^{n+1}} =
\begin{pmatrix}
A & B\\
B^{T} & C
\end{pmatrix}
\]
that satisfies properties (M), (S) and (N) above.  Using
Lemma~\ref{squeeze-lemma}, we can transform $E$ in such a
way that we force the triangle inequality (T) to hold without
affecting the other three properties.  Thus the resulting matrix
$D_{n+1}=(d_{uv})_{u,v\in \mathbb{F}_{2}^{n+1}}$ represents a master
metric for the $(n+1)$-fold $Z$-channel, as desired.
\end{proof}

\section{Asymmetric channels}\label{bac}

A binary asymmetric channel (BAC) with parameters $\left( p,q\right) $
is a memoryless channel with binary input and output alphabet with
transition probabilities given by $\Pr(0|0)=1-p$, $\Pr(1|0)=p$,
$\Pr(0|1)=q$, $\Pr(1|1)=1-q$, where $0< p< q<\frac{1}{2}$. The
boundary cases of a BAC are the BSC (for
$p=q$) and the $Z$-channel (for $p=0$).  The squeezing function of
Lemma~\ref{squeeze-lemma} ensures the triangle inequality can always
be attained.  Hence the unique difficulty to construct a metric
matched to a given channel lies on the symmetry of a distance
matrix. From this point of view, we could expect that finding a
matched metric for an asymmetric channel should become harder as the
asymmetry of the channel grows, that is, as $p$ becomes closer to $0$
and we get a $Z$-channel. On the other hand, as shown in
Section~\ref{Z} above, the $Z$-channel does admit a matched metric.
We remark that the well known asymmetric distance (see, for example,
\cite{SerbanRao}) is a metric but it is not matched to the BAC.

We conjecture that there is a matched metric for any BAC. We briefly
describe some approaches to the problem.

We first note that the decision rule for the $Z$-channel of length $n$
with $P(0|1) = q$, $0<q<\frac12$, is independent of $q$, as is the
decision rule for the BSC of length $n$ with $P(1|0) =q = P(0|1)$,
$0<q <\frac12$.  However, there are in general multiple decision rules
for BACs of length $n$, depending on the relationship between $p =
P(1|0)$ and $q = P(0|1)$, $0<p<q<\frac12$.  For example, while the
decision rule for the BAC of length $2$ is independent of $p$ and $q$,
$0<p<q<\frac12$, there are two distinct decision rules for BACs of
length $3$.  Explicitly, the decision rules for the BAC with $p = 0.1$
and $q = 0.2$ and the BAC with $p = 0.1$ and $q = 0.4$ do not coincide.
Thus, the question of whether there is a matched metric for any BAC
goes beyond just the length of the channel; the specific values of $p$
and $q$ must be considered as well.

One approach to proving that every BAC has a matched metric, which
proves to be unsuccessful, uses the fact that we may think of there
being a continuum between the $Z$-channel and the binary symmetric
channel, by way of binary asymmetric channels. We know that, for any
$n$, the $Z$-channel admits a matched metric (as proven in
Section~\ref{Z}) and the BSC admits a matched metric (the Hamming
metric on $\F_2^n$).  Letting $D_n$ be the matrix representing the
metric matched to the $Z$ channel of length $n$ and $H_n$ be the
matrix representing the Hamming metric on $\F_2^n$, we set $E_n(t) =
(1-t)D_n + t H_n$.  Then $E_n(0) = D_n$, $E(1) = H_n$, and, letting
$\Delta_n(t)$ be the matrix obtained by applying Lemma~\ref{squeeze-lemma} to
$E_n(t)$, one might
hope that the interval $0<t<1$ breaks into subintervals on which
$\Delta_n(t)$ is a matrix representing a metric matched to the various BACs of
length $n$.  

This approach works for $n=2$: we find that if we take $0<\alpha =
\delta<1$, $\beta>1$, and $\gamma>1$ in the construction of
Section~\ref{Z}, then there is a non-empty subinterval of $(0,1)$,
depending only on the relationship between $\alpha$ and $\gamma$, such
that $\Delta_2(t)$ is matched to the (unique) BAC of length $2$ for every $t$
in that subinterval.  In particular, if $\gamma \geq \frac1\alpha$,
then $\Delta_2(t)$ is matched to the unique BAC of length $2$ for every $t$
with $0<t<1$.

The approach fails, however, for $n=3$.  Recall that the metric $D_3$
matched to the $Z$-channel of length 3 is constructed in
Section~\ref{Z} by applying Lemma~\ref{squeeze-lemma} to a matrix $E$
that depends on the matrix $D_2$ (previously produced by the
construction) and constants $\alpha$, $\beta$, $\gamma$ and $\delta$.
Because there is no a priori reason that these constants be the same
in the length 2 and length 3 constructions and we wish to preserve as
much flexibility as possible, we write $\alpha_i$, $\beta_i$,
$\gamma_i$ and $\delta_i$ for the constants used in the construction
of the matrix $D_i$ for $i=2,3$.  We find that there are no values of
these constants such that the metric represented by the matrix
$\Delta_3(t)$ obtained by applying Lemma~\ref{squeeze-lemma} to
$E_3(t) = (1-t)D_3 + t H_3$ represents either of the two BAC channels
of length $3$ for any value of $t$.

A different approach is to try to construct a matrix representing a
metric matched to the BAC directly.  First, construct the probability
matrix for the BAC under consideration.  This yields an integer ``distance
order matrix'' whose rows give the reverse order of the sizes of the
entries of the probability matrix.  For example, when $n=2$, the
probability and distance order matrices for the unique BAC with
$0<p<q<\frac12$ are 
\[
\begin{blockarray}{ccccc}
&00&01&10&11\\
\begin{block}{c(cccc)}
00&(1-p)^2 & (1-p)q& (1-p)q&q^2\\
01&(1-p)p & (1-p)(1-q) & pq & (1-q)q\\
10&(1-p)p & pq & (1-p)(1-q) &  (1-q)q\\
11&p^2 & p(1-q) & p(1-q) & (1-q)^2\\
\end{block}
\end{blockarray}
\text{\quad and \quad}
\begin{blockarray}{ccccc}
&00&01&10&11\\
\begin{block}{c(cccc)}
00&1&2&2&3\\
01&3&1&4&2\\
10&3&4&1&2\\
11&3&2&2&1\\
\end{block}
\end{blockarray}
\]
We now work line-by-line to construct a symmetric matrix in which all
diagonal entries are 0 and all other entries are positive.
Considering only the first line, we have
\[
\begin{pmatrix}
0 & a & a & b\\
a & 0 &    &   \\
a &    & 0 &   \\
b &    &    & 0
\end{pmatrix}
\]
with $a<b$.  Moving on to the second and third lines gives 
\[
\begin{pmatrix}
0 & a & a & b \\
a & 0 & c & d \\
a & c & 0 & e \\
b & d & e & 0
\end{pmatrix}
\]
with $d<a<c$ and $e<a<c$.  Finally, looking at the fourth line we see
that we must have $d<b$ and $d=e$.  Putting this together yields a matrix
\[
\begin{pmatrix}
0 & a & a & b \\
a & 0 & c & d \\
a & c & 0 & d \\
b & d & d & 0
\end{pmatrix}
\]
with $d<a<b$, $d<a<c$, and $b$ and $c$ incomparable.  This means that,
for example, applying Lemma~\ref{squeeze-lemma} to the matrix
\[
\begin{pmatrix}
0 & 2 & 2 & 3 \\
2 & 0 & 3 & 1 \\
2 & 3 & 0 & 1 \\
3 & 1 & 1 & 0
\end{pmatrix}
\]
yields a matrix representing a metric matched to the unique BAC of length 2.  A similar
procedure gives that applying Lemma~\ref{squeeze-lemma} to the
matrices
\[
\begin{pmatrix}
0 & 3 & 3 & 4 & 3 & 4 & 4 & 5 \\
 3 & 0 & 5 & 2 & 5 & 2 & 6 & 4 \\
 3 & 5 & 0 & 2 & 5 & 6 & 2 & 4 \\
 4 & 2 & 2 & 0 & 6 & 3 & 3 & 1 \\
 3 & 5 & 5 & 6 & 0 & 2 & 2 & 4 \\
 4 & 2 & 6 & 3 & 2 & 0 & 3 & 1 \\
 4 & 6 & 2 & 3 & 2 & 3 & 0 & 1 \\
 5 & 4 & 4 & 1 & 4 & 1 & 1 & 0 \\
\end{pmatrix}
\text{\phantom{space} and \phantom{space}}
\begin{pmatrix}
 0 & 4 & 4 & 5 & 4 & 5 & 5 & 6 \\
 4 & 0 & 5 & 2 & 5 & 2 & 6 & 3 \\
 4 & 5 & 0 & 2 & 5 & 6 & 2 & 3 \\
 5 & 2 & 2 & 0 & 6 & 3 & 3 & 1 \\
 4 & 5 & 5 & 6 & 0 & 2 & 2 & 3 \\
 5 & 2 & 6 & 3 & 2 & 0 & 3 & 1 \\
 5 & 6 & 2 & 3 & 2 & 3 & 0 & 1 \\
 6 & 3 & 3 & 1 & 3 & 1 & 1 & 0 \\
\end{pmatrix}
\]
yields matrices representing metrics matched to the two BACs of length
3.

Thus we see that every BAC of length 2 or 3 has a matched metric, and
we conjecture that this holds for general lengths.

\section*{Acknowledgements} Example~\ref{NoMatchExample} is due to the
first author's student, Rafael Gregorio.  This work began when both
authors were visiting Centre Interfacultaire Bernoulli at \'Ecole
Polytechnique F\'ed\'erale de Lausanne, and they thank the CIB for its
hospitality.  The first author was partially supported by grant 2013/25977-7, S\~{a}o Paulo Research Foundation (Fapesp). The second author was supported in part by NSF Grant
\#0903517.

%Out of each $D\left(  t\right)  =\left(
%d_{xy}\left(  t\right)  \right)  _{x,y\in\mathbb{F}_{2}^{n}}$ we construct an
%oriented graph $\Gamma_{t}$ having $\mathbb{F}_{2}^{n}$ as the set of vertices
%and an edge connects $x$ to $y$ if and only if $d_{xy}\left(  t\right)  \leq
%d_{xz}\left(  t\right)  \forall x\neq z\in\mathbb{F}_{2}^{n}$ (this is the
%approach adopted in \cite{FVY}). We consider the incidence matrix $A\left(
%\Gamma\right)  $ of an oriented graph $\Gamma$ having $\mathbb{F}_{2}^{n}$ as
%the set of vertices\ and say of a graph $\Gamma\preceq\Gamma^{\prime}$ if the
%support of $A\left(  \Gamma\right)  $ (the nn-zero positions of the matrix) is
%contained in the support of $A\left(  \Gamma^{\prime}\right)  $. The relation
%$\preceq$ is an order relation and in this order the graphs $\Gamma_{0}$ and
%$\Gamma_{1}$ are antipode, in the sense they have no common edge. We imagine
%(and verified for $n=2,3$) that varying $t$ we get a master metric for every BAC.

\bibliographystyle{amsplain}
\bibliography{refs}

% \bigskip\bigskip

% \begin{thebibliography}{9}                                                                                                %

% \bibitem{ChiangWolf} Chiang, J.C.Y. and Wolf, J.K.

% \bibitem {Deza} M.M.\ Deza, and E.\ Deza, \emph{Encyclopedia of
% Distances}, 2nd edition. Springer, Heidelberg, 2013. xviii+650 pp.

% \bibitem {Forney} G.D.\ Forney, ``Geometrically uniform codes,''
% \emph{IEEE Transactions on Information Theory}, vol.\ 37, no.\ 5, pp.\
% 1241-1260, 1991.

% \bibitem {FVY} A.\ Fazeli, A.\ Vardy and E.\ Yaakobi, E., ``Generalized
% sphere packing bound'', 2014, arxiv.org/abs/1401.6496.

% \bibitem {KuKi}Kulkarni, A.A.; Kiyavash, N., "\emph{Nonasymptotic upper bounds
% for deletion correcting codes}," Information Theory, IEEE Transactions on ,
% vol.59, no.8, pp.5115,5130, 2013.

% \bibitem {SerbanRao}Serban D. Constantin, T.R.N. Rao, \ "\textit{On the theory
% of binary asymmetric error correcting codes "}, Information and Control,
% Volume 40, Issue 1, 1979, Pages 20-36.
% \end{thebibliography}

\end{document}